\def\tr {\mathrm{Tr}}
\newtheorem{theorem}{Theorem}
\newtheorem{corollary}{Corollary}
\newtheorem{lemma}{Lemma}
\title{Generalized Wigner-Yanase Skew Information and the Affiliated Inequality}
\author
{Ma-Cheng Yang$^{1}$, Cong-Feng Qiao$^{1,2}$\footnote{correspondence: qiaocf@ucas.ac.cn}\\
\\
\normalsize{$^{1}$School of Physical Sciences, University of Chinese Academy of Sciences}\\
\normalsize{YuQuan Road 19A, Beijing 100049, China}\\
\normalsize{$^{2}$Key Laboratory of Vacuum Physics, University of Chinese Academy of Sciences}\\
\normalsize{YuQuan Road 19A, Beijing 100049, China}\\
\\
}
\date{}
\begin{document}


\baselineskip24pt

\maketitle



\begin{abstract}
A family of skew information quantities is obtained, in which the well-known Wigner-Yanase skew information and quantum Fisher information stand as special cases. A transparent proof of convexity of the generalized skew information is given, implying a simple proof of the Wigner-Yanase-Dyson conjecture. We find in this work an exact skew information inequality for qubit system, which may regard as the information counterpart of the uncertainty relation. A lower bound for generalized skew information of a pair of incompatible observables in arbitrary dimension and also the upper bound for qubit system are achieved.
\end{abstract}






\section{Introduction}

Uncertainty relation exhibits the intrinsic limit of simultaneous measurement of incompatible observables, which in fact captures the unpredictability of the quantum world. In other words, one is hindered to gain enough information by eliminating as much as possible uncertainties in the process of quantum measurement. Based on the measures of variance, entropy, etc., abundant uncertainty relations have been established, see e.g. Refs. \cite{heisenberg27,robertson29,schrodinger30,maassen88,friedland13,coles17,li19,li20-tvh34}, which unveil the measurement limit from various different aspects. On the other hand, considering of the indeterminacy inhabited in microworld, it is interesting to know the quantum limit in information content. In order to quantify the information content of a quantum state $\rho$ with respect to an observable $X$, Wigner and Yanase defined the measure of ``skew information" \cite{wigner63}. It writes
\begin{align}
I(\rho,X) = -\frac{1}{2}\tr\left[[\sqrt{\rho},X]^2\right] \; .
\label{si}
\end{align}
Here, $[A,B]=AB-BA$ denotes commutator.

Equipped with the definition on $I(\rho,X)$ in (\ref{si}), it is tempting to formulate certain constraint similar to uncertainty relation to determine the accessibility of information in the measurement of incompatible observables. In Ref. \cite{luo03-exuie}, Luo constructed an uncertainty relation based on Wigner-Yanase skew information, the Theorem 2 therein. However, as pointed out in Refs. \cite{rivas08,hansen08,li09}, that theorem may have a problem and a counterexample was indeed given \cite{rivas08}. Later on, an even tighter inequality \cite{luo04} was obtained, and again it was also found beatable \cite{yanagi05}. Up to now, unfortunately, these problems still have no proper solutions yet. Besides, it is notable that in regard of the information content, an upper bound for skew information product of two incompatible observables is in some sense even meaningful, which tells us maximally how much information one may obtain.

In this work, we introduce a family of measures about the amount of information, which satisfies the definition given by Wigner and Yanase \cite{wigner63}. Furthermore, the Wigner-Yanase skew information and quantum Fisher information belong to this family. It is worthy noting that the Wigner-Yanase skew information and quantum Fisher information have been widely employed to the research of quantum information, such as uncertainty relation \cite{dehesa07,gibilisco07,andai08,gibilisco08,toranzo15,zhang21,ren21-skew}, entanglement detection \cite{chen05,li13,hyllus12,akbari-kourbolagh19,hong15,ren21}, quantum coherence \cite{girolami14,karpat14,luo17}, and so forth. It is highly expected that the generalized skew information may impact on the quantum information science. We prove the convexity of the generalized skew information by means of a plain technique, which implies the correctness of Wigner-Yanase-Dyson conjecture. It should be noted that the conjecture had ever been proved by Lieb \cite{lieb73,nielsen10} and Hansen later \cite{hansen06,hansen08}, whereas subject to a sophisticated proof of the theorem therein. We obtain a lower bound for the generalized skew information of a pair of incompatible observables in arbitrary dimension, and in the meantime an upper bound for the qubit system.

\section{The generalized Wigner-Yanase skew information}

We use $M_{d}(\mathds{C})$ to denote the vector space constructed by $d\times d$ complex matrices and define the following function on $M_{d}(\mathds{C})$
\begin{align}
\zeta_{\rho}(X,Y) &:= \tr[\rho X^{\dagger}Y] - \notag \\ &\sum_{i,j}f(\lambda_{i},\lambda_{j})\braket{\psi_{i}|X^{\dagger}|\psi_{j}}\braket{\psi_{j}|Y|\psi_{i}} \; .
\end{align}
Here, $\rho=\sum_{i=1}^{d}\lambda_{i}\ket{\psi_{i}}\bra{\psi_{i}}$, $\lambda_{r+1}=\cdots=\lambda_{d}=0$ and $\operatorname{rank}(\rho)=r$. $\ket{\psi_{r+1}},\cdots,\ket{\psi_{d}}$ are orthogonal bases of the null space of $\rho$, which implies the completeness relation $\sum_{i=1}^{d}\ket{\psi_{i}}\bra{\psi_{i}}=\mathds{1}$. And the eigenvalues $\lambda_{i}$s are arrayed in descending order, i.e., $\lambda_{1}\geq \cdots\geq\lambda_{r}>0$. Then, it is straightforward to check that
$\zeta_{\rho}(X,Y)$ is a sesquilinear function on the $M_{d}(\mathds{C})$.

In this work we will focus on a special representative case $f(\lambda_{i},\lambda_{j})=m_{s}(\lambda_{i},\lambda_{j})$ with $m_{s}(a_{1},a_{2})$ denoting the generalized mean of arbitrary positive real numbers $a_{1}$ and $a_{2}$ of equal weighting \cite{hardy52,bullen03}
\begin{align}
m_{s}(a_{1},a_{2}) = \left(\frac{a_{1}^{s}+a_{2}^{s}}{2}\right)^{1/s} \; ,
\label{eq:gene_mean}
\end{align}
where $-\infty<s<0$. For cases $s=0$ and $-\infty$, $m_{s}(a_{1},a_{2})$ is defined by limiting processes \cite{hardy52}
\begin{align}
&m_{0}(a_{1},a_{2}):=\lim_{s\rightarrow 0}m_{s}(a_{1},a_{2})=\sqrt{a_{1}a_{2}} \; , \\
&m_{-\infty}(a_{1},a_{2}):=\lim_{s\rightarrow -\infty}m_{s}(a_{1},a_{2})=\min\{a_{1},a_{2}\} \; .
\end{align}
To ensure \cref{eq:gene_mean} being well-defined, we require $m_{s}(a_{1},0)=m_{s}(0,a_{2})=m_{s}(0,0)=0$.
Then we can define the following sesquilinear function on $M_{d}(\mathds{C})$
\begin{align}
\zeta_{\rho}^{s}&(X,Y) := \tr[\rho X^{\dagger}Y] - \notag \\ &\quad \sum_{i,j}m_{s}(\lambda_{i},\lambda_{j})\braket{\psi_{i}|X^{\dagger}|\psi_{j}}\braket{\psi_{j}|Y|\psi_{i}} \; , \notag \\
&= \sum_{i\neq j}\left[\lambda_{i}-m_{s}(\lambda_{i},\lambda_{j})\right]\braket{\psi_{i}|X^{\dagger}|\psi_{j}}\braket{\psi_{j}|Y|\psi_{i}} \; .
\label{zetarho}
\end{align}
Here, $M_{d}(\mathds{C})$ represents the dimension-$d^2$ complex inner product space \cite{bognar74}. Due to the hermiticity, $\zeta_{\rho}^{s}(X,X)=\zeta_{\rho}^{s}(X,X)^{*}$, there are three possible cases, i.e. $\zeta_{\rho}^{s}(X,X)>0$, $\zeta_{\rho}^{s}(X,X)<0$, or $\zeta_{\rho}^{s}(X,X)=0$. Correspondingly, $X$ is said to be positive, negative, or neutral vector. In quantum mechanics, $X$ denotes an observable.

In the following, we will prove that $\zeta_{\rho}^{s}(X,X)$ is always nonnegative, hence $\zeta_{\rho}^{s}(X,X)$ can be treated as the generalized Wigner-Yanase skew information
\begin{align}
I^{s}(\rho,X) :=& \zeta_{\rho}^{s}(X,X) \; .
\end{align}
Indeed, the Wigner-Yanase skew information is merely the special case of $s=0$
\begin{align}
I(\rho,X) &= I^{0}(\rho,X) \notag \\
&= \tr[\rho X^{\dagger}X] - \sum_{i,j}m_{0}(\lambda_{i},\lambda_{j})|\braket{\psi_{i}|X|\psi_{j}}|^2 \\
&= \tr[\rho X^{\dagger}X] - \tr[\sqrt{\rho}X^{\dagger}\sqrt{\rho}X] \\
&= -\frac{1}{2}\tr\left[[\sqrt{\rho},X]^2\right] \; .
\end{align}
Here, $m_{0}(a_{1},a_{2})=\sqrt{a_{1}a_{2}}$ signifies the geometric mean. Notice that if $m_{s}(\lambda_{i},\lambda_{j})$ is replaced by the generalized mean with weignt $w\in[0,1]$, i.e. $m_{s}(\lambda_{i},\lambda_{j})=\left(w \lambda_{i}^{s}+(1-w)\lambda_{j}^{s}\right)^{1/s}$, we obtain the Wigner-Yanase-Dyson skew information $I_{w}(\rho,X)=\tr[\rho X^{\dagger}X]-\tr[\rho^{w}X^{\dagger}\rho^{1-w}X]$ due to the fact $\lim_{s\rightarrow 0}m_{s}(\lambda_{i},\lambda_{j})=\lambda_{i}^{w}\lambda_{j}^{(1-w)}$. It is easy to see that the quantum Fisher information (QFI) \cite{helstrom69,toth14} is the special case of $s=-1$
\begin{align}
F(\rho,X) &= I^{-1}(\rho,X) \\
= \tr&[\rho X^{\dagger}X] - \sum_{i,j}m_{-1}(\lambda_{i},\lambda_{j})|\braket{\psi_{i}|X^{\dagger}|\psi_{j}}|^2 \; .
\end{align}
Here, $m_{-1}(a_{1},a_{2})=\frac{2}{1/a_{1}+1/a_{2}}$ signifies harmonic mean. Given $X$ an observable, $X=X^{\dagger}$, QFI can be expressed in terms of a symmetric logarithmic derivative \cite{braunstein94,holevo11,toth14}, that is $F(\rho,X)=\frac{1}{4}\tr[{\rho L^2}]$ with $L=2i\sum_{k,l=1}^{r}\frac{\lambda_k-\lambda_l} {\lambda_k+\lambda_l}\braket{\psi_{k}|X|\psi_{l}}\ket{\psi_{k}}\bra{\psi_{l}}$. It is noted that Hansen has proposed the metric adjusted skew information as a generalized Wigner-Yanase skew information via Morozova-Chentsov function \cite{hansen08}. Because the generalized mean $m_{s}(a_1,a_2)$ is an operator mean, that is, $m_{s}(1,a)$ is an operator monotone function and satisfying equation $m_{s}(1,a)=am_{s}(1,a^{-1})$, if and only if $-1\leq s\leq 1$ \cite{nakamura89}, $I^{s}(\rho,X)$ reduces to the metric adjusted skew information if $-1\leq s\leq 0$.

As a measure of information amount, $I^{s}(\rho,X)$ should be nonnegative and ``skew'' implies $I^{s}(\rho,X)=0$ if $[\rho,X]=0$, as argued by Wigner and Yanase \cite{wigner63}. In forthcoming contents, we have a detailed discussions and prove that the generalized skew information $I^{s}(\rho,X)$ satisfies the same requirements with Wigner-Yanase skew information.

\begin{theorem}
$I^{s}(\rho,X)$ is independent of the trace of $X$ and satisfies $I^{s}(U\rho U^{\dagger},X)=I^{s}(\rho,U^{\dagger}XU)$ for some unitary matrix $U$; If $X$ is a conserved quantity of an isolated system, $I^{s}(\rho,X)$ is independent of time; $I^{s}(\rho,X) \geq 0$, where equal holds if and only if $\rho$ and $X$ are commutative, $[\rho,X]=0$.
\label{th:gene_skew_non}
\end{theorem}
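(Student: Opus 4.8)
The plan is to verify the three claims in turn, working from the explicit bilinear form \eqref{zetarho}, which I take as the primary handle since it already isolates the off-diagonal structure that carries all the information content.

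\emph{Trace-independence.} Writing $X' = X + c\,\mathds{1}$ for a real constant $c$, I note that $\braket{\psi_i|X'|\psi_j} = \braket{\psi_i|X|\psi_j} + c\,\delta_{ij}$, so the added term only affects the diagonal $i=j$ contributions. But \eqref{zetarho} sums only over $i\neq j$, where $\lambda_i - m_s(\lambda_i,\lambda_j)$ need not vanish yet the matrix elements of $\mathds{1}$ do. Hence $I^s(\rho,X'+c\mathds{1}) = I^s(\rho,X)$, which in particular makes the quantity insensitive to $\tr X$.

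\emph{Unitary covariance.} Here I would substitute $\rho \mapsto U\rho U^\dagger$, whose spectral decomposition is $\sum_i \lambda_i\, (U\ket{\psi_i})(U\ket{\psi_i})^\dagger$ with the \emph{same} eigenvalues $\lambda_i$ but rotated eigenvectors $\ket{\psi_i'} = U\ket{\psi_i}$. Plugging into \eqref{zetarho}, each factor $\lambda_i - m_s(\lambda_i,\lambda_j)$ is untouched, while $\braket{\psi_i'|X|\psi_j'} = \braket{\psi_i|U^\dagger X U|\psi_j}$; the first term $\tr[U\rho U^\dagger X^\dagger X]$ similarly equals $\tr[\rho (U^\dagger X U)^\dagger (U^\dagger X U)]$. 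Collecting these gives $I^s(U\rho U^\dagger, X) = I^s(\rho, U^\dagger X U)$ directly. (The stated ``skew'' consequence $I^s(U\rho U^\dagger,X)=I^s(\rho,X)$ when $[U,X]=0$ is then immediate.)

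\emph{Nonnegativity and the equality case.} This is the substantive part, and the key input is the inequality $\lambda_i \geq m_s(\lambda_i,\lambda_j)$ — wait, that is false in general (it holds only when $\lambda_i \le \lambda_j$ for $s<0$). The honest route is to symmetrize: pairing the $(i,j)$ and $(j,i)$ terms in \eqref{zetarho} and using $\braket{\psi_j|X|\psi_i} = \overline{\braket{\psi_i|X^\dagger|\psi_j}}$ for the observable $X=X^\dagger$, the sum becomes $\sum_{i<j}\bigl[(\lambda_i+\lambda_j) - 2\,m_s(\lambda_i,\lambda_j)\bigr]\,|\braket{\psi_i|X|\psi_j}|^2$. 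Nonnegativity then reduces to the scalar fact that the power mean never exceeds the arithmetic mean for $s \le 1$, in particular for $s\in[-\infty,0]$: $m_s(\lambda_i,\lambda_j) \le m_1(\lambda_i,\lambda_j) = \tfrac{1}{2}(\lambda_i+\lambda_j)$, with strict inequality whenever $\lambda_i \neq \lambda_j$ (and still $\lambda_i+\lambda_j - 2m_s > 0$ when one of them is zero, by the boundary convention $m_s(\lambda_i,0)=0$). Therefore each summand is $\ge 0$, so $I^s(\rho,X)\ge0$. For equality, every summand with $\lambda_i\neq\lambda_j$ forces $\braket{\psi_i|X|\psi_j}=0$; the surviving matrix elements lie within degenerate eigenspaces, on which $\rho$ acts as a scalar, so $X$ is block-diagonal with respect to the eigenspaces of $\rho$, i.e. $[\rho,X]=0$. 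Conversely $[\rho,X]=0$ makes $X$ block-diagonal and kills every $i\neq j$ term in \eqref{zetarho}, giving $I^s(\rho,X)=0$.

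\emph{Main obstacle.} The only delicate point is getting the scalar mean inequality $m_s \le m_1$ with the right strictness behaviour, \emph{including} the degenerate cases where $\lambda_i$ or $\lambda_j$ vanishes (rank-deficient $\rho$), where one must fall back on the stipulated convention $m_s(a,0)=0$ rather than the formula \eqref{eq:gene_mean}; there $(\lambda_i+\lambda_j)-2m_s = \lambda_i+\lambda_j > 0$ automatically. Everything else is bookkeeping on the spectral decomposition.
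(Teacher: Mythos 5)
Your proposal is correct and follows essentially the same route as the paper: both reduce nonnegativity to the identity $I^{s}(\rho,X)=\sum_{i,j}\bigl[\tfrac{\lambda_{i}+\lambda_{j}}{2}-m_{s}(\lambda_{i},\lambda_{j})\bigr]|\braket{\psi_{i}|X|\psi_{j}}|^{2}$ (the paper via normality of $X$, you via pairing $(i,j)$ with $(j,i)$ using hermiticity) and then invoke the power-mean monotonicity $m_{s}\leq m_{1}$. Your treatment is in fact slightly more careful than the paper's on the strictness of that inequality, the rank-deficient boundary convention, and the ``only if'' direction of the equality case, all of which the paper leaves implicit.
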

\begin{proof}
By definition, the generalized skew information can be expressed as
\begin{align}
I^{s}(\rho,X) = \sum_{i\neq j}\left[\lambda_{i}-m_{s}(\lambda_{i},\lambda_{j})\right]\left|\braket{\psi_{i}|X^{\dagger}|\psi_{j}}\right|^2 \; ,
\end{align}
which is not dependent on the diagonal entries of $X$ in the eigenspace of $\rho$. Considering that the trace of $X$ is independent of basis, $I^{s}(\rho,X)$ is independent of the trace of $X$. Via above equation, obviously $I^{s}(\rho,X)$ satisfies $I^{s}(U\rho U^{\dagger},X)=I^{s}(\rho,U^{\dagger}XU)$ for some unitary matrix $U$. If $X$ is a conserved quantity of an isolated system, then we have $[U,X]=0$ and $I^{s}(U\rho U^{\dagger},X)=I^{s}(\rho,X)$, where $U$ is the time revolution operator. As a matter of fact, we can prove that $I^{s}(\rho,X) \geq 0$ is true even for normal matrix $X$. If $X$ is normal, i.e. $XX^{\dagger}=X^{\dagger}X$, then $\tr[\rho X^{\dagger}X]=\sum_{ij}\lambda_{i}|\braket{\psi_{i}|X^{\dagger}|\psi_{j}}|^{2}$, $\tr[\rho XX^{\dagger}]=\sum_{ij}\lambda_{j}|\braket{\psi_{i}|X^{\dagger}|\psi_{j}}|^{2}$ and $\tr[\rho X^{\dagger}X]=\tr[\rho XX^{\dagger}]=\sum_{ij}\frac{\lambda_{i}+\lambda_{j}}{2}|\braket{\psi_{i}|X^{\dagger}|\psi_{j}}|^{2}$. Thus
\begin{align}
&I^{s}(\rho,X)= \sum_{i,j}\left[\frac{\lambda_{i}+\lambda_{j}}{2}-m_{s}(\lambda_{i},\lambda_{j})\right]|\braket{\psi_{i}|X^{\dagger}|\psi_{j}}|^2\geq 0 \; , \notag
\end{align}
due to the monotonicity of the generalized mean i.e., $m_{r}(a_{1},a_{2})\leq m_{s}(a_{1},a_{2})$ if $r<s$ \cite{hardy52}. If $[\rho,X]=0$, $X$ is then diagnoseable in the eigenspace of $\rho$. Therefore, $I^{s}(\rho,X)=\sum_{i\neq j}\left[\lambda_{i}-m_{s}(\lambda_{i},\lambda_{j})\right]|\braket{\psi_{i}|X^{\dagger}|\psi_{j}}|^2=0$ and vice versa.
\end{proof}
\begin{theorem}\label{th:gene_skew_mono}
For arbitrary observable $X$ and quantum state $\rho$, $I^{s}(\rho,X)$ is a monotonically decreasing function of $s$ and less than variance, say
\begin{align}
I^{0}(\rho,X) \leq \cdots \leq I^{-\infty}(\rho,X) \leq V(\rho,X) \; ,
\end{align}
and $I^{s}(\rho,X)=V(\rho,X)$ when $\rho$ is a pure state. Here, $V(\rho,X)=\tr[\rho X^{\dagger}X]-|\tr[\rho X^{\dagger}]|^2$ stands for variance.
\end{theorem}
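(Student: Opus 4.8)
The plan is to argue entirely from the off-diagonal form in \cref{zetarho}. Because $X=X^{\dagger}$ the coefficient there is $\braket{\psi_{i}|X^{\dagger}|\psi_{j}}\braket{\psi_{j}|X|\psi_{i}}=|\braket{\psi_{i}|X|\psi_{j}}|^{2}$, which is symmetric under $i\leftrightarrow j$, and $m_{s}$ is a symmetric function; relabelling the dummy indices and averaging the two resulting expressions therefore puts $I^{s}(\rho,X)$ into the manifestly symmetric shape
\begin{align}
I^{s}(\rho,X)=\sum_{i\neq j}\left[\frac{\lambda_{i}+\lambda_{j}}{2}-m_{s}(\lambda_{i},\lambda_{j})\right]|\braket{\psi_{i}|X|\psi_{j}}|^{2}\;. \notag
\end{align}
The arithmetic mean equals $m_{1}$, so the classical monotonicity of power means, $m_{r}(a_{1},a_{2})\le m_{s}(a_{1},a_{2})$ for $r<s$ \cite{hardy52}, makes every bracketed coefficient a nonnegative, monotonically decreasing function of $s$ on the range $s\le 1$. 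Since the weights $|\braket{\psi_{i}|X|\psi_{j}}|^{2}$ are nonnegative, $I^{s}(\rho,X)$ is monotonically decreasing in $s$, which already yields $I^{0}(\rho,X)\le\cdots\le I^{-\infty}(\rho,X)$, with the endpoints $s=0,-\infty$ entering via the stated limits.

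For the comparison with the variance I would use the shift invariance from \cref{th:gene_skew_non} — that $I^{s}(\rho,X)$ and, by a one-line check, also $V(\rho,X)$ are unchanged under $X\mapsto X-cI$ — to reduce to the case $\tr[\rho X]=0$, so that $V(\rho,X)=\tr[\rho X^{2}]=\sum_{i,j}\lambda_{i}|\braket{\psi_{i}|X|\psi_{j}}|^{2}$ after inserting the completeness relation. Dropping the nonnegative diagonal terms and symmetrizing the remainder exactly as above gives $V(\rho,X)\ge\sum_{i\neq j}\frac{\lambda_{i}+\lambda_{j}}{2}|\braket{\psi_{i}|X|\psi_{j}}|^{2}$. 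On the other hand $m_{-\infty}(\lambda_{i},\lambda_{j})=\min\{\lambda_{i},\lambda_{j}\}$ gives $\frac{\lambda_{i}+\lambda_{j}}{2}-m_{-\infty}(\lambda_{i},\lambda_{j})=\frac{|\lambda_{i}-\lambda_{j}|}{2}\le\frac{\lambda_{i}+\lambda_{j}}{2}$ for $\lambda_{i},\lambda_{j}\ge 0$, so a term-by-term comparison delivers $I^{-\infty}(\rho,X)\le V(\rho,X)$.

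The pure-state case is then a direct evaluation: for $\rho=\ket{\psi_{1}}\bra{\psi_{1}}$ one has $\lambda_{1}=1$ and all other eigenvalues zero, so every off-diagonal pair contains a vanishing eigenvalue and $m_{s}(\lambda_{i},\lambda_{j})=0$ there for all $s$; with the normalization $\tr[\rho X]=0$ the diagonal piece of $V$ also vanishes, and both $I^{s}(\rho,X)$ and $V(\rho,X)$ reduce to $\sum_{j\ge 2}|\braket{\psi_{1}|X|\psi_{j}}|^{2}$, hence coincide.

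The bulk of the argument is routine once the symmetrized form is in hand; the only place demanding care is the variance step, where one must keep the index bookkeeping honest — the separation of $i=j$ from $i\neq j$ terms, the role played by the null-space vectors with $\lambda_{i}=0$, and the choice of shift — so that the term-by-term comparison is legitimate and the equality case for pure states falls out cleanly. The power-mean monotonicity itself is standard and is simply cited.
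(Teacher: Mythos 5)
Your proof is correct. The monotonicity and pure-state parts coincide with the paper's own argument: symmetrize the off-diagonal sum so the coefficient reads $\frac{\lambda_{i}+\lambda_{j}}{2}-m_{s}(\lambda_{i},\lambda_{j})$, invoke the monotonicity of power means in $s$, and evaluate directly for rank-one $\rho$. Where you genuinely diverge is the comparison with the variance. The paper observes that $g_{\rho}(X,Y)=\sum_{i,j}m_{s}(\lambda_{i},\lambda_{j})\braket{\psi_{i}|X^{\dagger}|\psi_{j}}\braket{\psi_{j}|Y|\psi_{i}}$ is a positive semidefinite sesquilinear form and applies Cauchy--Schwarz with $Y=\mathds{1}$ (using $g_{\rho}(\mathds{1},\mathds{1})=1$ and $g_{\rho}(X,\mathds{1})=\tr[\rho X]$) to get $g_{\rho}(X,X)\geq \tr[\rho X]^{2}$, hence $I^{s}(\rho,X)\leq V(\rho,X)$ for every $s$ in one stroke. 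You instead center $X$ so that $\tr[\rho X]=0$ via shift invariance and prove only the endpoint bound $I^{-\infty}(\rho,X)\leq V(\rho,X)$ through the elementary term-by-term estimate $\frac{|\lambda_{i}-\lambda_{j}|}{2}\leq\frac{\lambda_{i}+\lambda_{j}}{2}$, letting the already-established monotone chain carry the bound to all $s$. Both routes are sound; the Cauchy--Schwarz argument is slicker and independent of the chain, while yours is more elementary and makes the slack explicit, namely $V-I^{-\infty}=\sum_{i}\lambda_{i}|X_{ii}|^{2}+\sum_{i\neq j}\min\{\lambda_{i},\lambda_{j}\}|X_{ij}|^{2}$ in the centered gauge. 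One pedantic point common to both write-ups: the convention $m_{s}(a,0)=m_{s}(0,0)=0$ should be checked to respect monotonicity in $s$ (it does, trivially, being constant there), which is exactly the case relevant to the rank-deficient and pure-state terms.
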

\begin{proof}
Since the generalized mean $m_{s}(a_{1},a_{2})$ monotonically increases with $s$ \cite{hardy52}, $I^{s}(\rho,X)$ is a monotonically decreasing function. Hence, according to the definitions of $I^{s}(\rho,X)$ and $V(\rho,X)$, we only need to compare $\sum_{i,j}m_{s}(\lambda_{i},\lambda_{j})|\braket{\psi_{i}|X^{\dagger}|\psi_{j}}|^2$ with $|\tr[X^{\dagger}]|^2$. It is obvious that $g_{\rho}(X,Y)=\sum_{i,j}m_{s}(\lambda_{i},\lambda_{j})$ $\braket{\psi_{i}|X^{\dagger}|\psi_{j}}\braket{\psi_{j}|Y|\psi_{i}}$ is a sesquilinear function on the $M_{d}(\mathds{C})$ and $g_{\rho}(X,X)=\sum_{i,j}m_{s}(\lambda_{i},\lambda_{j})$ $|\braket{\psi_{i}|X^{\dagger}|\psi_{j}}|^2\geq 0, \forall X\in M_{d}(\mathds{C})$. Thus, $g_{\rho}(X,Y)$ is a semidefinite inner product on the $M_{d}(\mathds{C})$, which implies the Cauchy-Schwarz inequality, $\forall X,Y \in M_{d}(\mathds{C})$,
\begin{align}
g_{\rho}(X,X)g_{\rho}(Y,Y) \geq |g_{\rho}(X,Y)|^2 \; .
\end{align}
Taking $X$ to be a hermitian operator and $Y=\mathds{1}$, we have
\begin{align}
g_{\rho}(X,X) \geq \left(\sum_{i}\lambda_{i}\braket{\psi_{i}|X|\psi_{i}}\right)^2 = \tr[\rho X]^2 \; ,
\end{align}
where $g_{\rho}(\mathds{1},\mathds{1})=\sum_{i,j}m_{s}(\lambda_{i},\lambda_{j})\delta_{ij}=\sum_{i}\lambda_{i}=1$ is employed. Therefore the conclusion $I^{s}(\rho,X) \leq V(\rho,X)$ is obtained. If $\rho=\ket{\psi}\bra{\psi}$, the pure state, then $\sum_{i,j}m_{s}(\lambda_{i},\lambda_{j})|\braket{\psi_{i}|X|\psi_{j}}|^2=(\braket{\psi|X|\psi})^2$, and hence $I^{s}(\rho,X)=V(\rho,X)$.
\end{proof}

Since the unification of ensembles will inevitably leads to certain information loss of individual ensembles, the information content of a grand ensemble is less than that of the average information content of component ensembles. Therefore, the generalized skew information should satisfy the following convexity.
\begin{theorem}
For arbitrary observable $X$, $I^{s}(\rho,X)$ is a convex function of density matrices, i.e.,
\begin{align}
I^{s}\left(\rho,X\right) \leq \sum_{i}p_{i}I^{s}(\rho_{i},X) \; .
\end{align}
Here, $\rho=\sum_{i}p_{i}\rho_{i}$ with $\sum_{i}p_{i}=1$ and $0<p_{i}<1$.
\end{theorem}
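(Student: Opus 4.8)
\emph{Proof sketch (plan).}---The plan is to convert the statement into an operator‑concavity statement about operator means. Write $L_\rho$ and $R_\rho$ for left and right multiplication by $\rho$ on the Hilbert--Schmidt space $\big(M_d(\mathds{C}),\,\langle X,Y\rangle=\tr[X^\dagger Y]\big)$; these two superoperators commute and carry joint eigenvalues $(\lambda_i,\lambda_j)$ on $\ket{\psi_i}\bra{\psi_j}$, so \cref{zetarho} is exactly $I^{s}(\rho,X)=\langle X,\,T^{s}_{\rho}\,X\rangle$ with $T^{s}_{\rho}:=L_\rho-m_s(L_\rho,R_\rho)$ (a positive superoperator by \cref{th:gene_skew_non}). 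Now $I^{s}(\rho,X)$ is convex in $\rho$ for \emph{every} $X$ if and only if the superoperator‑valued map $\rho\mapsto T^{s}_{\rho}$ is operator convex; since $\rho\mapsto L_\rho$ is linear, the whole claim reduces to showing that $\rho\mapsto m_s(L_\rho,R_\rho)$ is operator concave.

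For $-1\le s\le 0$ I would get this from the Kubo--Ando theory. The scalar function $x\mapsto m_s(x,1)=\big((1+x^{s})/2\big)^{1/s}$ is operator monotone on $(0,\infty)$ --- checkable either from the Pick‑function criterion (its principal‑branch extension sends the upper half‑plane into itself) or from Löwner's integral representation --- hence $m_s$ is an operator mean, and therefore $m_s(A,B)$ is a positive superposition over a parameter $\lambda$ of weighted harmonic means, i.e.\ of terms built from the parallel sum $A:B=(A^{-1}+B^{-1})^{-1}=A-A(A+B)^{-1}A$.

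The engine is the joint operator concavity of the parallel sum, which I would prove in one line: $(A,B)\mapsto A(A+B)^{-1}A$ is the Schur complement realized by $\begin{pmatrix}A+B & A\\ A & Z\end{pmatrix}\ge 0$ and is therefore jointly operator convex, so $A:B=A-A(A+B)^{-1}A$ is jointly operator concave. Feeding in $A=L_\rho$ and $B=\lambda R_\rho$, both linear in $\rho$, and integrating over $\lambda$ against the defining measure of $m_s$, yields the operator concavity of $\rho\mapsto m_s(L_\rho,R_\rho)$, hence the convexity of $I^{s}(\rho,X)$. For $s=0$ this is precisely the concavity of $\tr[\sqrt{\rho}\,X^\dagger\sqrt{\rho}\,X]$; running the same argument with the weighted mean of the footnote (whose generator $x\mapsto(wx^{s}+1-w)^{1/s}$ is operator monotone for $-1\le s\le 0$ and tends to $x^{w}$ as $s\to0$) gives the concavity of $\tr[\rho^{w}X^\dagger\rho^{1-w}X]$, i.e.\ the Wigner--Yanase--Dyson conjecture.

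The main obstacle I anticipate is the range $s<-1$ (and the limit $s=-\infty$, $m_{-\infty}=\min$): there $x\mapsto m_s(x,1)$ is no longer operator monotone, $m_s$ ceases to be an operator mean, and the clean superposition above is unavailable, so this regime has to be reached by a separate device --- for example the Gamma‑integral $c^{-\alpha}=\frac{1}{\Gamma(\alpha)}\int_0^{\infty}u^{\alpha-1}e^{-uc}\,du$ applied to $c=(L_\rho^{s}+R_\rho^{s})/2$ (which, however, runs into the non‑operator‑monotonicity of $t\mapsto e^{-ut^{s}}$), or an approximation of $I^{s}$ from within the operator‑mean range; settling this is the delicate point. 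A minor issue to dispatch is rank‑deficient $\rho$, where $L_\rho+\lambda R_\rho$ is singular, handled by the usual continuity of the parallel sum. Everything else --- the reformulation of the first paragraph, the linearity of $L_\rho,R_\rho$, and the passage from operator concavity of the superoperator back to scalar convexity of $I^{s}(\rho,X)$ --- is routine.
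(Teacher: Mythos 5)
Your reduction of the theorem to operator concavity of $\rho\mapsto m_s(L_\rho,R_\rho)$ is sound, and the Kubo--Ando/parallel-sum argument you sketch does prove the claim --- but only on the range $-1\le s\le 0$, exactly where you yourself place the obstacle. The binomial mean $x\mapsto\bigl((1+x^{s})/2\bigr)^{1/s}$ is operator monotone precisely for $|s|\le 1$; for $s<-1$ its analytic continuation has branch points at $z=\pm i$ inside the upper half-plane, so the Pick property fails, $m_s$ is not a Kubo--Ando mean, and the integral representation over weighted harmonic means on which your whole engine runs simply does not exist. The same is true of the endpoint $m_{-\infty}=\min$. Neither of the fallback devices you mention is carried through (you note yourself that the Gamma-integral route collapses on the non-operator-monotonicity of $t\mapsto e^{-ut^{s}}$). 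Since the theorem is asserted for all $s\in[-\infty,0]$, this is a genuine gap rather than a technicality: what you have is a proof of the sub-case $s\in[-1,0]$. That sub-case does contain the Wigner--Yanase point $s=0$, the QFI point $s=-1$, and the Wigner--Yanase--Dyson application, so it is valuable, but it is not the stated theorem.

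For comparison, the paper takes an entirely different, elementary route that is uniform in $s$: it restricts to pure-state decompositions (where $I^{s}$ reduces to the variance, cf.\ \cref{th:gene_skew_mono}), uses the Hughston--Jozsa--Wootters unitary freedom to relate $\{p_i,\ket{\phi_i}\}$ to the eigen-ensemble and deduce $\sum_i p_i\braket{\phi_i|X|\phi_i}^2\le\sum_{i,j}\lambda_i\lambda_j|\braket{\psi_i|X|\psi_j}|^2$, and then invokes the scalar inequality $\lambda_i\lambda_j\le\min\{\lambda_i,\lambda_j\}\le m_s(\lambda_i,\lambda_j)$, which holds for every $s\ge-\infty$. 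Each approach buys something the other lacks: the paper's covers the whole $s$-range, but its ``without loss of generality pure'' step literally establishes convexity only over pure-state decompositions (for mixed $\rho_i$ the chain of inequalities runs the wrong way to conclude), whereas your superoperator argument, where it applies, gives convexity for arbitrary mixed decompositions directly. To complete your proof you must either supply a separate argument for $s<-1$ (e.g.\ adapt the paper's ensemble argument, or find a representation of $L_\rho-m_s(L_\rho,R_\rho)$ as a positive combination of jointly convex pieces valid beyond the operator-mean regime) or restrict the claimed range.
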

\begin{proof}
Because the set of density matrices is a convex hull of pure states, any density matrix can then be expressed as the convex combination of pure sates,
\begin{align}
\rho = \sum_{i}p_{i}\ket{\phi_{i}}\bra{\phi_{i}} \; .
\label{eq:ensemble_decom}
\end{align}
Considering of this, without loss of generality, we are legitimate to perform the proof with the convex combination of pure states. Since $\tr[\rho X^2]$ is linear in density matrix, it is enough only to compare $\sum_{i,j}m_{s}(\lambda_{i},\lambda_{j})|\braket{\psi_{i}|X|\psi_{j}}|^2$ with $\sum_{i}p_{i}|\braket{\phi_{i}|X|\phi_{i}}|^2$. The decomposition of \cref{eq:ensemble_decom} is not unique, there exist infinite number of similar decompositions. This can be grasped in light of the well-known theorem for the classification of quantum ensembles \cite{hughston93,nielsen10}, which states that ensembles $\{p_{i},\ket{\phi_{i}}\}$ and $\{q_{j},\ket{\varphi_{j}}\}$ generate the same density matrix if and only if $\sqrt{p_{i}}\ket{\phi_{i}}=\sum_{j}U_{ij}\sqrt{q_{j}}\ket{\varphi_{j}}$ hold for some unitary matrix $U$. Thus, the ensemble $\{p_{i},\ket{\phi_{i}}\}$ is associated with eigen-ensemble $\{\lambda_{i},\ket{\psi_{i}}\}$ via unitary matrix $U$, i.e. $\sqrt{p_{i}}\ket{\phi_{i}} = \sum_{j}U_{ij}\sqrt{\lambda_{j}}\ket{\psi_{j}}$. Then
\begin{align}
\sum_{i}p_{i}|\braket{\phi_{i}|X|\phi_{i}}|^2 &= \sum_{i}\left|\sum_{j,j'}U_{ij'}^{*}X'_{j'j}U_{ji}^{\mathrm{T}}\right|^2 \nonumber \\
&= \sum_{i}|X''_{ii}|^2 \leq \tr[X'^2] \nonumber \\
&= \sum_{i,j}\lambda_{i}\lambda_{j}|\braket{\psi_{i}|X|\psi_{j}}|^2 \; .
\end{align}
Here, $X'_{j'j}=\sqrt{\lambda_{j'}\lambda_{j}}\braket{\psi_{j'}|X|\psi_{j}}$, $X''=U^{*}X'U^{\mathrm{T}}$ and $\tr[X'^2]=\tr[X''^2]$ since $U$ is unitary. Considering
\begin{align}
\lambda_{i}\lambda_{j}\leq \min\{\lambda_{i},\lambda_{j}\} \; , \text{for} \; 0\leq\lambda_{i},\lambda_{j}\leq 1 \; ,
\end{align}
then the assertion follows.
\end{proof}
We have proved the convexity of the generalized skew information by virtue of the theorem for classification of quantum ensembles. If $m_{s}(\lambda_{i},\lambda_{j})$ is replaced by the generalized mean with weight $w\in[0,1]$, i.e., $m_{s}(\lambda_{i},\lambda_{j})=\left(w \lambda_{i}^{s}+(1-w)\lambda_{j}^{s}\right)^{1/s}$, we can then obtain more skew information quantities, and the convexity can be proved in a similar way. This indicates that the renowned Wigner-Yanase-Dyson conjecture is confirmed.

\begin{theorem}
The information content of the union of two systems should be the sum of
the information contents of the components, that is, the generalized skew information satisfies additivity
\begin{align}
I^{s}(\rho_{A}\otimes\rho_{B},X_{A}+X_{B}) = I^{s}(\rho_{A},X_{A}) +I^{s}(\rho_{B},X_{B}) \; ,
\end{align}
where $X_{A}$ and $X_{B}$ are observables of two subsystems respectively.
\end{theorem}
\begin{proof}
We only need to calculate the two terms $\tr[\rho X^2]$ and $\sum_{i,j}m_{s}(\lambda_{i},\lambda_{j})|\braket{\psi_{i}|X|\psi_{j}}|^2$ respectively. After simplication, we have $\tr[\rho X^2]=\tr[(\rho_{A}\otimes\rho_{B})(X_{A}+X_{B})^2] = \tr[\rho_{A}X_{A}^2] + \tr[\rho_{B}X_{B}^2] +2\tr[\rho_{A}X_{A}]\tr[\rho_{B}X_{B}]$. Next we calculate $\sum_{i,j}m_{s}(\lambda_{i},\lambda_{j})|\braket{\psi_{i}|X|\psi_{j}}|^2$. Assuming that $\rho_{A},\rho_{B}$ have the following spectrum decompositions
\begin{align}
\rho_{A} = \sum_{i}\lambda_{i}^{A}\ket{\psi_{i}^{A}}\bra{\psi_{i}^{A}} \; , \; \rho_{B}=\sum_{i}\lambda_{i}^{B}\ket{\psi_{i}^{B}}\bra{\psi_{i}^{B}} \; .
\end{align}
And then $\rho=\rho_{A}\otimes\rho_{B}$ has spectrum decomposition
\begin{align}
\rho = \sum_{i,j}\lambda_{i}^{A}\lambda_{j}^{B}\ket{\psi_{i}^{A}}\ket{\psi_{j}^{B}}\bra{\psi_{i}^{A}}\bra{\psi_{j}^{B}} \; .
\end{align}
The second term becomes
\begin{align}
&\sum_{ij}\sum_{kl}m_{s}(\lambda_{i}^{A}\lambda_{j}^{B},\lambda_{k}^{A}\lambda_{l}^{B})\left|\bra{\psi_{i}^{A}}\bra{\psi_{j}^{B}}X_{A}+X_{B}\ket{\psi_{k}^{A}}\ket{\psi_{l}^{B}}\right|^2 = \notag \\
&\sum_{ij}\sum_{kl}m_{s}(\lambda_{i}^{A}\lambda_{j}^{B},\lambda_{k}^{A}\lambda_{l}^{B})\left|\bra{\psi_{i}^{A}}X_{A}\ket{\psi_{k}^{A}}\delta_{jl}+\bra{\psi_{j}^{B}}X_{B}\ket{\psi_{l}^{B}}\delta_{ik}\right|^2 \; ,
\end{align}
which simplifies into three terms
\begin{align}
\sum_{ij}\sum_{k}m_{s}(\lambda_{i}^{A}\lambda_{j}^{B},\lambda_{k}^{A}\lambda_{j}^{B})\left|\braket{\psi_{i}^{A}|X_{A}|\psi_{k}^{A}}\right|^2 \; , \\ \sum_{ij}\sum_{l}m_{s}(\lambda_{i}^{A}\lambda_{j}^{B},\lambda_{i}^{A}\lambda_{l}^{B})\left|\braket{\psi_{j}^{B}|X_{B}|\psi_{l}^{B}}\right|^2 \; , \\
2\sum_{ij}m_{s}(\lambda_{i}^{A}\lambda_{j}^{B},\lambda_{i}^{A}\lambda_{j}^{B})\bra{\psi_{i}^{A}}X_{A}\ket{\psi_{i}^{A}}\bra{\psi_{j}^{B}}X_{B}\ket{\psi_{j}^{B}} \; .
\end{align}
Considering that the properties of generalized mean $m_{s}(a_{1}b,a_{2}b)=bm_{s}(a_{1},a_{2})$ and $m_{s}(a,a)=a$, these terms become
\begin{align}
\sum_{ik}m_{s}(\lambda_{i}^{A},\lambda_{k}^{A})\left|\braket{\psi_{i}^{A}|X_{A}|\psi_{k}^{A}}\right|^2 \; , \\
\sum_{jl}m_{s}(\lambda_{j}^{B},\lambda_{l}^{B})\left|\braket{\psi_{j}^{B}|X_{B}|\psi_{l}^{B}}\right|^2 \; , \\
2\tr[\rho_{A}X_{A}]\tr[\rho_{B}X_{B}] \; .
\end{align}
Via $\tr[\rho X^2]-\sum_{i,j}m_{s}(\lambda_{i},\lambda_{j})|\braket{\psi_{i}|X|\psi_{j}}|^2$, the assertion follows.
\end{proof}
Above, we prove all the properties proposed by Wigner and Yanase besides the subadditivity
\begin{align}
I^{s}(\rho,X_{A}+X_{B}) \geq I^{s}(\rho_{A},X_{A}) +I^{s}(\rho_{B},X_{B}) \; ,
\end{align}
where $\rho_{A}=\tr_{B}[\rho],\rho_{B}=\tr_{A}[\rho]$ are reduced density matrices. Wigner and Yanase only proved that the subadditivity is true for pure state and conjectured that it is always true because the whole system contains the additional statistical correlation information. Unfortunately, the conjecture has been negatived by Hansen \cite{hansen07}. However, quantum state $\rho$ always can be transformed into the normal form by local filtering or SLOCC (stochastic local operations assisted by classical communication) transformation \cite{verstraete03}
\begin{align}
\rho = \frac{1}{d^2}\mathds{1}\otimes\mathds{1} + \sum_{\mu\nu}\chi_{\mu\nu}\pi_{\mu}\otimes\pi_{\nu} \; ,
\end{align}
where $\pi_{\mu}$s are traceless orthogonal
observables. Obviously, under the local filtering transformation all the quantum state will satisfy subadditivity because the reduced matrices are the maximal mixed state and skew information will be zero.

The generalized skew information $I^{s}(\rho,X)$ reduces to the variance for pure state. The next result reveals that the generalized skew information only depends on the purity of quantum state in some sense for a qubit system.
\begin{theorem}\label{th:qubit_gene_skew}
For a qubit system, if quantum sates $\rho$ and $\rho'$ have same purity, i.e. $\tr[\rho^{2}]=\tr[\rho'^{2}]$, then for nonzero observables $X$ and $X'$, in case $[\rho', X']\neq 0$ we have
\begin{align}
I^{s}(\rho,X) = \eta I^{s}(\rho',X') \; ,
\end{align}
where $\eta=|\braket{\psi_{1}|X|\psi_{2}}|^2/|\braket{\psi'_{1}|X'|\psi'_{2}}|^2$.
\end{theorem}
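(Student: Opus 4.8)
The plan is to compute $I^{s}(\rho,X)$ explicitly for an arbitrary qubit state and reduce the expression to a single ``off-diagonal weight'' factor times a function of the eigenvalues alone. For a qubit with $\operatorname{rank}(\rho)\le 2$, the double sum in \cref{zetarho} collapses to the two cross terms $(i,j)=(1,2)$ and $(2,1)$, so
\begin{align}
I^{s}(\rho,X) &= \left[\lambda_{1}-m_{s}(\lambda_{1},\lambda_{2})\right]|\braket{\psi_{1}|X^{\dagger}|\psi_{2}}|^2 \notag \\
&\quad + \left[\lambda_{2}-m_{s}(\lambda_{1},\lambda_{2})\right]|\braket{\psi_{2}|X^{\dagger}|\psi_{1}}|^2 \; .
\end{align}
Using $|\braket{\psi_{1}|X^{\dagger}|\psi_{2}}| = |\braket{\psi_{2}|X|\psi_{1}}|$ together with $\lambda_{1}+\lambda_{2}=1$, the bracketed coefficients sum to $\lambda_{1}+\lambda_{2}-2m_{s}(\lambda_{1},\lambda_{2}) = 1-2m_{s}(\lambda_{1},\lambda_{2})$, and since both off-diagonal matrix elements of the Hermitian observable $X$ have equal modulus, I would factor out $|\braket{\psi_{1}|X|\psi_{2}}|^2$ to obtain the clean form $I^{s}(\rho,X) = \bigl[1-2m_{s}(\lambda_{1},\lambda_{2})\bigr]\,|\braket{\psi_{1}|X|\psi_{2}}|^2$.

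Next I would show that the eigenvalue-dependent prefactor $1-2m_{s}(\lambda_{1},\lambda_{2})$ is itself determined by the purity $\tr[\rho^2]$. Writing $\lambda_{1,2} = \tfrac{1}{2}(1\pm\mu)$ with $\mu=\sqrt{2\tr[\rho^2]-1}\in[0,1]$ the length of the Bloch vector, the power mean $m_{s}(\lambda_{1},\lambda_{2})$ depends only on $\mu$, hence so does the whole prefactor; call it $\kappa_{s}(\mu)$. Therefore $I^{s}(\rho,X) = \kappa_{s}(\mu)\,|\braket{\psi_{1}|X|\psi_{2}}|^2$ and likewise $I^{s}(\rho',X') = \kappa_{s}(\mu')\,|\braket{\psi'_{1}|X'|\psi'_{2}}|^2$. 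The hypothesis $\tr[\rho^2]=\tr[\rho'^2]$ forces $\mu=\mu'$, so $\kappa_{s}(\mu)=\kappa_{s}(\mu')$, and dividing the two identities (legitimate because $[\rho',X']\neq 0$ guarantees $\braket{\psi'_{1}|X'|\psi'_{2}}\neq 0$, so $I^{s}(\rho',X')>0$ by \cref{th:gene_skew_non} and the denominator is nonzero) yields $I^{s}(\rho,X) = \eta\, I^{s}(\rho',X')$ with $\eta$ exactly the stated ratio of squared moduli.

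The only genuinely delicate points are bookkeeping ones rather than deep obstacles. First, one must handle the case $\operatorname{rank}(\rho)=1$ uniformly: then $\lambda_{2}=0$, $m_{s}(\lambda_{1},0)=0$ by the normalization convention adopted after \cref{eq:gene_mean}, and $I^{s}$ reduces to the variance, consistent with \cref{th:gene_skew_mono}; the factored formula still holds with $\kappa_{s}(1)=1$. Second, one should note that $\braket{\psi_{1}|X|\psi_{2}}$ can be replaced by its modulus freely because $X=X^{\dagger}$ makes the cross terms complex conjugates of one another, so no phase information survives. I expect the main thing to get right is simply ensuring the reduction to the two-term sum and the factoring of the matrix-element modulus are stated for \emph{both} pairs $(\rho,X)$ and $(\rho',X')$ with the same function $\kappa_{s}$, after which the conclusion is immediate. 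No appeal to convexity or to the Cauchy-Schwarz machinery of the earlier theorems is needed beyond the positivity statement used to justify the division.
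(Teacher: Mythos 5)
Your proposal is correct and follows essentially the same route as the paper: both reduce the qubit sum to the two cross terms, factor out $|\braket{\psi_{1}|X|\psi_{2}}|^2$ to get the prefactor $1-2m_{s}(\lambda_{1},\lambda_{2})$, and observe that equal purity fixes the eigenvalue pair (your Bloch-vector parametrization is just the paper's Lemma~\ref{lem:qubit_purity} in different clothing). Your explicit justification that the denominator is nonzero via $[\rho',X']\neq 0$ and Theorem~\ref{th:gene_skew_non} is a small bookkeeping point the paper leaves implicit.
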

To prove the claim, we first show the following lemma.
\begin{lemma}
For a qubit system, if $\rho$ and $\rho'$ have the same purity, they possess the same set of eigenvalues.
\label{lem:qubit_purity}
\end{lemma}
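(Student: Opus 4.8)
The plan is to exploit the fact that a qubit density matrix has only two eigenvalues, whose sum is pinned down by normalization. First I would denote the eigenvalues of $\rho$ by $\lambda_{1}\geq\lambda_{2}\geq 0$ and those of $\rho'$ by $\lambda_{1}'\geq\lambda_{2}'\geq 0$, so that $\lambda_{1}+\lambda_{2}=\lambda_{1}'+\lambda_{2}'=1$ follows from $\tr\rho=\tr\rho'=1$. I would then rewrite the purity in terms of the elementary symmetric functions of the spectrum, $\tr[\rho^{2}]=\lambda_{1}^{2}+\lambda_{2}^{2}=(\lambda_{1}+\lambda_{2})^{2}-2\lambda_{1}\lambda_{2}=1-2\lambda_{1}\lambda_{2}$, and likewise $\tr[\rho'^{2}]=1-2\lambda_{1}'\lambda_{2}'$.

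The key step is then immediate: the hypothesis $\tr[\rho^{2}]=\tr[\rho'^{2}]$ forces $\lambda_{1}\lambda_{2}=\lambda_{1}'\lambda_{2}'$, and combined with $\lambda_{1}+\lambda_{2}=\lambda_{1}'+\lambda_{2}'=1$ this means the pairs $(\lambda_{1},\lambda_{2})$ and $(\lambda_{1}',\lambda_{2}')$ are the two roots of one and the same quadratic $t^{2}-t+\lambda_{1}\lambda_{2}=0$. Hence $\{\lambda_{1},\lambda_{2}\}=\{\lambda_{1}',\lambda_{2}'\}$ as sets, and with the descending-order convention adopted above, $\lambda_{1}=\lambda_{1}'$ and $\lambda_{2}=\lambda_{2}'$. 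Equivalently, in the Bloch representation $\rho=\tfrac{1}{2}(\mathds{1}+\vec r\cdot\vec\sigma)$ one has $\tr[\rho^{2}]=\tfrac{1}{2}(1+|\vec r|^{2})$ with eigenvalues $\tfrac{1}{2}(1\pm|\vec r|)$, so equal purity means equal $|\vec r|$, hence equal spectrum.

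There is essentially no obstacle here; the only point worth flagging is that the argument is genuinely two-dimensional. For $d\geq 3$ the purity is a single symmetric function of $d$ eigenvalues constrained only by $\sum_{i}\lambda_{i}=1$, so it cannot pin down the spectrum, and this is precisely why \cref{th:qubit_gene_skew} is restricted to qubits. I would therefore keep this lemma to the two lines above and move on to \cref{th:qubit_gene_skew} itself, where the real content lies — expressing $I^{s}(\rho,X)$ for a same-purity pair of states through the single surviving off-diagonal matrix element $\braket{\psi_{1}|X|\psi_{2}}$.
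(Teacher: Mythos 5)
Your proof is correct and follows essentially the same route as the paper's: both reduce the hypothesis to equality of the elementary symmetric functions $\lambda_1+\lambda_2$ and $\lambda_1\lambda_2$, and then conclude equality of the ordered spectra (the paper via the factorization $(\lambda_1-\lambda_1')(1-\lambda_1-\lambda_1')=0$, you via Vieta's formulas for the common quadratic). The Bloch-sphere remark and the observation that the argument fails for $d\geq 3$ are accurate but not needed.
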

\begin{proof}
Assume the eigenvalues of $\rho,\rho'$ satisfy $\lambda_{1}\geq \lambda_{2}$ and $\lambda'_{1}\geq \lambda'_{2}$ respectively. $\tr[\rho^2]=\tr[\rho'^2]$ implies $\lambda_{1}^2+\lambda_{2}^2=\lambda'^2_{1}+\lambda'^2_{2}$. Considering $\lambda_{1}+\lambda_{2}=\lambda'_{1}+\lambda'_{2}=1$, it is easy to find $\lambda_{1}\lambda_{2}=\lambda'_{1}\lambda'_{2}$, and therefore  $(\lambda_{1}-\lambda'_{1})(1-\lambda_{1}-\lambda'_{1})=0$, which tells $\lambda_{1}=\lambda'_{1}$ on account of $\lambda_{1}\geq \lambda_{2}$ and $\lambda'_{1}\geq \lambda'_{2}$.
\end{proof}
For a qubit system, we notice that
\begin{align}
I^{s}(\rho,X) =& (\lambda_{1}-m_{s}(\lambda_{1},\lambda_{2}))|\braket{\psi_{1}|X|\psi_{2}}|^2 + \notag \\
&(\lambda_{2}-m_{s}(\lambda_{2},\lambda_{1}))|\braket{\psi_{2}|X|\psi_{1}}|^2 \notag \\
=& (1-2m_{s}(\lambda_{1},\lambda_{2}))|\braket{\psi_{1}|X|\psi_{2}}|^2 \; .
\end{align}
According to \cref{lem:qubit_purity}, we can find $I^{s}(\rho,X)=\eta I^{s}(\rho',X')$ with $\eta=|\braket{\psi_{1}|X|\psi_{2}}|^2/|\braket{\psi'_{1}|X'|\psi'_{2}}|^2$.
\begin{figure}
\centering
\includegraphics[width=0.8\linewidth]{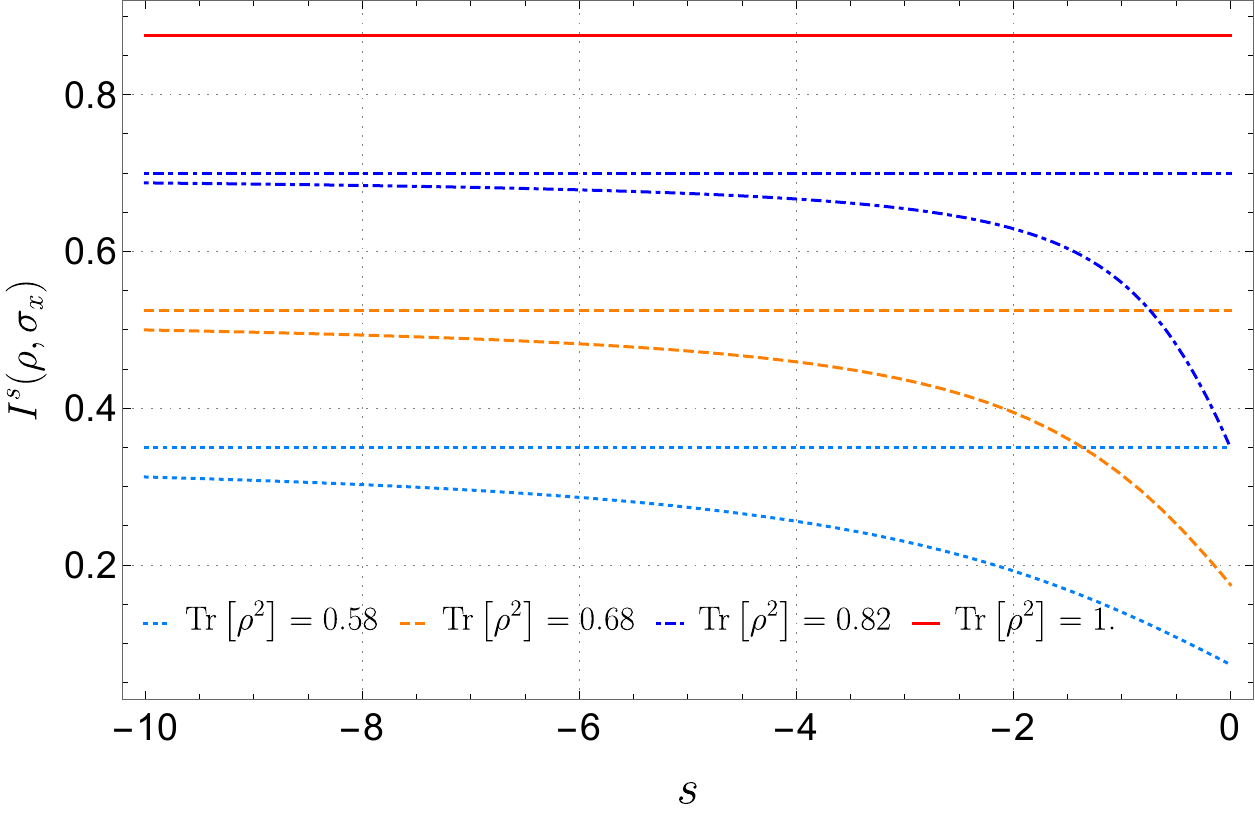}
\caption{\label{fig:qubit_gene_skew_info} The generalized skew information $I^{s}(\rho,X)$ with different purity. The horizontal lines exhibit $I^{-\infty}(\rho,X)$ for different purity, while the other curves show the changes of $I^{s}(\rho,X)$ versus $s$ with definite purity. Here, $\rho=1/2(\mathds{1}+\vec{r}\cdot\vec{\sigma})$, $\tr[\rho^2]=1/2(1+|\vec{r}|^2)$. Note, the solid red curve and solid red line are overlapped for pure states.}
\end{figure}
In \cref{fig:qubit_gene_skew_info}, we plot $I^{s}(\rho,\sigma_{x})$ versus different purity for qubit system, which visualizes the conclusions of \cref{th:gene_skew_non,th:gene_skew_mono}. Note, here for qubit system it is enough to consider only the Pauli operator $\sigma_{x}$ for exhibition according to \cref{th:qubit_gene_skew}.

\section{The skew information inequality}

Uncertainty relation sets a lower bound for measurement of a pair of observables $X,Y$, which mathematically writes
\begin{align}
V(\rho,X)V(\rho,Y) \geq |\operatorname{Cov}(\rho;X,Y)|^2 \; ,
\label{eq:schrodinger_form}
\end{align}
the Schrödinger form \cite{schrodinger30}. Here, $\operatorname{Cov}(\rho;X,Y)$ $=$ $\tr[\rho X^{\dagger}Y]-\tr[\rho X^{\dagger}]\tr[\rho Y]$ signifies the covariance. The skew information also has a similar expression which reveals the quantum limit of information content.
\begin{theorem}
For arbitrary dimensional system, a skew information inequality exists, that is
\begin{align}
I^{s}(\rho,X)I^{s}(\rho,Y) \geq \frac{1}{16}(I^{s}(\rho,X+Y)-I^{s}(\rho,X-Y))^2 \; .
\end{align}
\label{theo:gene_qfi_ineq}
\end{theorem}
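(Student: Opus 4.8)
\emph{Proof proposal.}---The plan is to read $I^{s}(\rho,X)=\zeta_{\rho}^{s}(X,X)$ as the quadratic form attached to the Hermitian sesquilinear form $\zeta_{\rho}^{s}(\cdot,\cdot)$ of \cref{zetarho}, and to run a Cauchy--Schwarz-type argument on the \emph{real} subspace of Hermitian operators, in direct parallel with the way the Schr\"odinger inequality \cref{eq:schrodinger_form} follows from positivity of the covariance form. First I would record the polarization identity: since $\zeta_{\rho}^{s}(X,Y)$ is antilinear in $X$, linear in $Y$, and obeys $\zeta_{\rho}^{s}(Y,X)=\zeta_{\rho}^{s}(X,Y)^{*}$, expanding $\zeta_{\rho}^{s}(X\pm Y,X\pm Y)$ gives
\begin{align}
I^{s}(\rho,X+Y)-I^{s}(\rho,X-Y)=2\bigl[\zeta_{\rho}^{s}(X,Y)+\zeta_{\rho}^{s}(Y,X)\bigr]=4\,\mathrm{Re}\,\zeta_{\rho}^{s}(X,Y)\,,
\end{align}
so that the right-hand side of \cref{theo:gene_qfi_ineq} is exactly $\bigl(\mathrm{Re}\,\zeta_{\rho}^{s}(X,Y)\bigr)^{2}$.

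Next I would establish the scalar inequality $\bigl(\mathrm{Re}\,\zeta_{\rho}^{s}(X,Y)\bigr)^{2}\le I^{s}(\rho,X)\,I^{s}(\rho,Y)$. For every real $t$ the operator $X+tY$ is again Hermitian, hence normal, so by the nonnegativity part of \cref{th:gene_skew_non},
\begin{align}
0\le I^{s}(\rho,X+tY)=\zeta_{\rho}^{s}(X+tY,X+tY)=I^{s}(\rho,Y)\,t^{2}+2\,\mathrm{Re}\,\zeta_{\rho}^{s}(X,Y)\,t+I^{s}(\rho,X)\,.
\end{align}
This is a real quadratic in $t$ with nonnegative leading and constant coefficients that remains $\ge 0$ on all of $\mathbb{R}$; if $I^{s}(\rho,Y)>0$ its discriminant is nonpositive, and if $I^{s}(\rho,Y)=0$ the surviving affine function forces $\mathrm{Re}\,\zeta_{\rho}^{s}(X,Y)=0$. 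In either case $\bigl(\mathrm{Re}\,\zeta_{\rho}^{s}(X,Y)\bigr)^{2}\le I^{s}(\rho,X)\,I^{s}(\rho,Y)$, and combining this with the polarization identity closes the argument.

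The only delicate point---and the single place where the argument is genuinely constrained---is that \cref{th:gene_skew_non} only delivers nonnegativity of $\zeta_{\rho}^{s}(W,W)$ for \emph{normal} $W$, not for an arbitrary $W\in M_{d}(\mathds{C})$; this is precisely why the test vector $X+tY$ must be taken with $t$ real, so as to stay Hermitian, and correspondingly why only $\mathrm{Re}\,\zeta_{\rho}^{s}(X,Y)$ rather than the full modulus $|\zeta_{\rho}^{s}(X,Y)|$ can be controlled. Fortunately this costs nothing, since the combination $I^{s}(\rho,X+Y)-I^{s}(\rho,X-Y)$ in the statement already sees only $\mathrm{Re}\,\zeta_{\rho}^{s}(X,Y)$, so the ``real Cauchy--Schwarz'' extracted from the discriminant is exactly sharp enough. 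I would also note in passing that equality in \cref{theo:gene_qfi_ineq} forces $X+tY$ to be a neutral vector of $\zeta_{\rho}^{s}$ at the optimal $t$, i.e. $[\rho,X+tY]=0$, which is the information-theoretic analogue of the minimum-uncertainty condition.
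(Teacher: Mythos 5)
Your proposal is correct and follows essentially the same route as the paper: both identify $\tfrac{1}{4}\bigl(I^{s}(\rho,X+Y)-I^{s}(\rho,X-Y)\bigr)$ with $\operatorname{Re}\zeta_{\rho}^{s}(X,Y)$ via polarization and then apply Cauchy--Schwarz to the positive-semidefinite symmetric bilinear form $\operatorname{Re}\zeta_{\rho}^{s}$ on the real space of Hermitian operators. The only cosmetic difference is that you prove the Cauchy--Schwarz step explicitly through the discriminant of the quadratic $t\mapsto I^{s}(\rho,X+tY)$ (correctly handling the degenerate case $I^{s}(\rho,Y)=0$), whereas the paper simply invokes the inequality for semidefinite forms.
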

\begin{proof}
We first show that $\operatorname{Re} \zeta_{\rho}^{s}(X,Y)$, defined in (\ref{zetarho}), is a bilinear function in the real vector space $\mathcal{H}_{d}=\{X|X=X^{\dagger},X\in M_{d}(\mathds{C})\}$. This is somewhat obvious, since one may notice that $\operatorname{Re} \zeta_{\rho}^{s}(X,Y)=[\zeta_{\rho}^{s}(X,Y)+\zeta_{\rho}^{s}(Y,X)]/2$, and hence $\operatorname{Re} \zeta_{\rho}^{s}(X,Y)$ is symmetric in $X$ and $Y$. Moreover, the bilinear function $\operatorname{Re} \zeta_{\rho}^{s}(X,Y)$ is semi-definite since $\operatorname{Re}\zeta_{\rho}^{s}(X,X)=\zeta_{\rho}^{s}(X,X)\geq 0,\forall X\in \mathcal{H}_{d}$. The above arguments imply the Cauchy-Schwarz inequality, $I^{s}(\rho,X)I^{s}(\rho,Y) \geq |\operatorname{Re} \zeta_{\rho}^{s}(X,Y)|^2$, and that the polarization identity in bilinear form of $\operatorname{Re} \zeta_{\rho}^{s}(X,Y)$ will certainly yield $\operatorname{Re} \zeta_{\rho}^{s}(X,Y)= \frac{1}{4}(I^{s}(\rho,X+Y)-I^{s}(\rho,X-Y))$.
\end{proof}
\begin{theorem}
For a qubit system, we have the following skew information inequalities:
\begin{align}
|\operatorname{Re} \zeta_{\rho}^{s}(X,Y)|^2 \leq I^{s}(\rho,X)I^{s}(\rho,Y) \leq |\zeta_{\rho}^{s}(X,Y)|^2 \; .
\end{align}
Here,
\begin{align}
&\operatorname{Re} \zeta_{\rho}^{s}(X,Y) = \frac{1}{4}(I^{s}(\rho,X+Y)-I^{s}(\rho,X-Y)) \; , \\
&\operatorname{Im} \zeta_{\rho}^{s}(X,Y) = \frac{1}{2i}\tr[\rho[X,Y]] \; .
\end{align}
\end{theorem}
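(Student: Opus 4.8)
The plan is to pass to the two-dimensional eigenbasis of $\rho$ and reduce the whole statement to the elementary power-mean inequality $m_{s}\geq\min$. Write $\rho=\lambda_{1}\ket{\psi_{1}}\bra{\psi_{1}}+\lambda_{2}\ket{\psi_{2}}\bra{\psi_{2}}$ with $\lambda_{1}\geq\lambda_{2}\geq 0$ and $\lambda_{1}+\lambda_{2}=1$, abbreviate $m_{s}=m_{s}(\lambda_{1},\lambda_{2})$, and set $a=\braket{\psi_{1}|X|\psi_{2}}$, $b=\braket{\psi_{1}|Y|\psi_{2}}$, $z=a\bar{b}$. Since $X=X^{\dagger}$ and $Y=Y^{\dagger}$ we have $\braket{\psi_{2}|X|\psi_{1}}=\bar{a}$ and $\braket{\psi_{2}|Y|\psi_{1}}=\bar{b}$; inserting these into (\ref{zetarho}) and using $m_{s}(\lambda_{1},\lambda_{2})=m_{s}(\lambda_{2},\lambda_{1})$ gives $\zeta_{\rho}^{s}(X,Y)=(\lambda_{1}-m_{s})z+(\lambda_{2}-m_{s})\bar{z}$. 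Splitting $z$ into real and imaginary parts and using $\lambda_{1}+\lambda_{2}=1$ collapses this to
\begin{align}
\zeta_{\rho}^{s}(X,Y)=(1-2m_{s})\operatorname{Re}z+i(\lambda_{1}-\lambda_{2})\operatorname{Im}z \; , \notag
\end{align}
so that $\operatorname{Re}\zeta_{\rho}^{s}(X,Y)=(1-2m_{s})\operatorname{Re}z$ and $\operatorname{Im}\zeta_{\rho}^{s}(X,Y)=(\lambda_{1}-\lambda_{2})\operatorname{Im}z$. Combined with the qubit identity $I^{s}(\rho,X)=(1-2m_{s})|a|^{2}$ (and likewise for $Y$) already obtained above, this yields $I^{s}(\rho,X)I^{s}(\rho,Y)=(1-2m_{s})^{2}|z|^{2}$.

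I would then dispatch the two displayed identities. The relation $\operatorname{Re}\zeta_{\rho}^{s}(X,Y)=\frac{1}{4}(I^{s}(\rho,X+Y)-I^{s}(\rho,X-Y))$ is the polarization identity for the symmetric bilinear form $\operatorname{Re}\zeta_{\rho}^{s}$ already proved in \cref{theo:gene_qfi_ineq}; for a qubit it also follows directly from $\operatorname{Re}\zeta_{\rho}^{s}(X,Y)=(1-2m_{s})\operatorname{Re}(a\bar{b})$ and $|a\pm b|^{2}=|a|^{2}+|b|^{2}\pm 2\operatorname{Re}(a\bar{b})$. For the imaginary part I would use the Hermitian symmetry $\overline{\zeta_{\rho}^{s}(X,Y)}=\zeta_{\rho}^{s}(Y,X)$ to write $\operatorname{Im}\zeta_{\rho}^{s}(X,Y)=\frac{1}{2i}(\zeta_{\rho}^{s}(X,Y)-\zeta_{\rho}^{s}(Y,X))$; the mean-weighted double sums cancel after relabelling $i\leftrightarrow j$ (using the symmetry of $m_{s}$), leaving $\tr[\rho(X^{\dagger}Y-Y^{\dagger}X)]=\tr[\rho[X,Y]]$.

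It remains to establish the two inequalities. The lower bound $|\operatorname{Re}\zeta_{\rho}^{s}(X,Y)|^{2}\leq I^{s}(\rho,X)I^{s}(\rho,Y)$ is nothing but the Cauchy-Schwarz inequality for the semi-definite symmetric bilinear form $\operatorname{Re}\zeta_{\rho}^{s}$ on the real space of Hermitian matrices, exactly as in \cref{theo:gene_qfi_ineq} (for a qubit it is immediate from $(1-2m_{s})^{2}(\operatorname{Re}z)^{2}\leq(1-2m_{s})^{2}|z|^{2}$). For the upper bound, the explicit forms above give $|\zeta_{\rho}^{s}(X,Y)|^{2}=(1-2m_{s})^{2}(\operatorname{Re}z)^{2}+(\lambda_{1}-\lambda_{2})^{2}(\operatorname{Im}z)^{2}$ while $I^{s}(\rho,X)I^{s}(\rho,Y)=(1-2m_{s})^{2}\left[(\operatorname{Re}z)^{2}+(\operatorname{Im}z)^{2}\right]$, so the claim is equivalent to $(1-2m_{s})^{2}\leq(\lambda_{1}-\lambda_{2})^{2}$. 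Because $s\leq 0<1$ forces $m_{s}\leq m_{1}=\frac{1}{2}$, both $1-2m_{s}\geq 0$ and $\lambda_{1}-\lambda_{2}=1-2\lambda_{2}\geq 0$, so this reduces to $m_{s}(\lambda_{1},\lambda_{2})\geq\lambda_{2}=\min\{\lambda_{1},\lambda_{2}\}$, i.e. the monotonicity $m_{s}\geq m_{-\infty}=\min$ invoked throughout the paper.

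I expect no real obstacle: the only ``idea'' is the reduction in the first paragraph, after which everything follows from $m_{s}\geq\min$. The degenerate cases are harmless---if $X$ or $Y$ is zero all terms vanish, and if $\rho$ is pure then $\lambda_{2}=0$, $m_{s}=0$ by the convention $m_{s}(a,0)=0$, and $1-2m_{s}=\lambda_{1}-\lambda_{2}=1$, so the upper bound is saturated.
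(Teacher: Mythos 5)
Your proposal is correct, and it proves the theorem by a genuinely different route from the paper. The paper establishes the upper bound $I^{s}(\rho,X)I^{s}(\rho,Y)\leq|\zeta_{\rho}^{s}(X,Y)|^{2}$ structurally: it treats $\zeta_{\rho}^{s}$ as an indefinite inner product, forms the $2\times2$ Hermitian matrix $\mathcal{P}^{s}(\rho)$ of $\zeta$-values, and shows $\det\mathcal{P}^{s}(\rho)\leq0$ by explicitly constructing a ``negative vector'' $X'=c_{1}X+c_{2}Y$ with complex coefficients chosen so that $\braket{\psi_{2}|\widetilde{X}'|\psi_{1}}=0$ while $\braket{\psi_{1}|\widetilde{X}'|\psi_{2}}\neq0$; it must treat the pure and maximally mixed states as separate cases and argue around the degenerate situation $X=\kappa Y$. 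You instead compute everything in the eigenbasis of $\rho$, obtaining $\zeta_{\rho}^{s}(X,Y)=(1-2m_{s})\operatorname{Re}z+i(\lambda_{1}-\lambda_{2})\operatorname{Im}z$ with $z=\braket{\psi_{1}|X|\psi_{2}}\overline{\braket{\psi_{1}|Y|\psi_{2}}}$, so that both inequalities collapse to $(1-2m_{s})^{2}\leq(\lambda_{1}-\lambda_{2})^{2}$, i.e.\ to $m_{s}(\lambda_{1},\lambda_{2})\geq\min\{\lambda_{1},\lambda_{2}\}=m_{-\infty}$; your derivations of the two displayed identities (polarization for the real part, Hermitian symmetry plus relabelling for the imaginary part) are also sound. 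Your computation is shorter, handles all purities uniformly (the pure and maximally mixed cases are just $\lambda_{2}=0$ and $\lambda_{1}=\lambda_{2}$), and makes the saturation conditions transparent: equality on the right holds iff $\operatorname{Im}z=0$ or $m_{s}=\lambda_{2}$ (e.g.\ $s=-\infty$ or $\rho$ pure), recovering the paper's observations as corollaries. What the paper's indefinite-metric formulation buys is a coordinate-free viewpoint that at least suggests a strategy for higher dimensions, where your explicit two-level reduction has no direct analogue.
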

\begin{proof}
The first half part of the inequality has been proved in \cref{theo:gene_qfi_ineq}. For the latter, we proceed the proving under two situations.

Case 1: when $\rho$ is a maximally mixed state, i.e. $\tr[\rho^2]=\frac{1}{2}$ or a pure state, i.e. $\tr[\rho^2]=1$. If $\rho$ is the maximally mixed, the inequality is true on account of $I^{s}(\rho,X)=0$. Otherwise, if $\rho$ is a pure state and $\rho=\sum_{i=1}^{2}\lambda_{i}\ket{\psi_{i}}\bra{\psi_{i}},\lambda_{1}=1,\lambda_{2}=0,\braket{\psi_{i}|\psi_{j}}=\delta_{ij}$, we have
\begin{align}
\zeta_{\rho}^{s}(X,Y) =& \sum_{i\neq j}\left[\lambda_{i}-m_{s}(\lambda_{i},\lambda_{j})\right]\braket{\psi_{i}|X^{\dagger}|\psi_{j}}\braket{\psi_{j}|Y|\psi_{i}} \notag \\
=& \braket{\psi_{1}|X^{\dagger}|\psi_{2}}\braket{\psi_{2}|Y|\psi_{1}} \; .
\end{align}
Here, we have employed the previous convention $m_{s}(1,0)=m_{s}(0,1)=0$. Similarly, we have $I^{s}(\rho,X)=|\braket{\psi_{1}|X^{\dagger}|\psi_{2}}|^2$ and $I^{s}(\rho,Y)=|\braket{\psi_{1}|Y^{\dagger}|\psi_{2}}|^2$.
Thus
\begin{align}
|\zeta_{\rho}^{s}(X,Y)|^2 = I^{s}(\rho,X)I^{s}(\rho,Y) \; .
\end{align}

Case 2: when $\rho$ is a mixed state with $\frac{1}{2}<\tr[\rho^2]<1$. We define the following matrix similar to covariance matrix for two observables $X$ and $Y$:
\begin{align}
\mathcal{P}^{s}(\rho) :=
\left(
\begin{matrix}
\zeta_{\rho}^{s}(X,X) & \zeta_{\rho}^{s}(X,Y) \\
\zeta_{\rho}^{s}(Y,X) & \zeta_{\rho}^{s}(Y,Y)
\end{matrix}
\right) \; .
\end{align}
$\mathcal{P}^{s}(\rho)$ is a Hermitian matrix due to the Hermiticity of $\zeta_{\rho}^{s}(X,Y)$. Now, the question is equivalent to prove that $\mathcal{P}^{s}(\rho)$ is an indefinite matrix, i.e. $\det \mathcal{P}^{s}(\rho)\leq 0$ for arbitrary observables $X,Y$ and quantum state $\rho$. A matrix is indefinite if and only if $\braket{\psi|A|\psi}$ is real for all $\ket{\psi}\in \mathds{C}^2$ and there are vectors $\ket{\psi},\ket{\phi}\in \mathds{C}^2$ such that $\braket{\phi|A|\phi}<0<\braket{\psi|A|\psi}$. Assuming $\ket{\psi}=(c_{1},c_{2})^{\mathrm{T}}$ with $c_{1},c_{2}\in \mathds{C}$, we then have
\begin{align}
\braket{\psi|\mathcal{P}^{s}(\rho)|\psi} = \zeta_{\rho}^{s}(X',X') = I^{s}(\rho,X') \; ,
\end{align}
where $X'=c_{1}X+c_{2}Y$. Therefore, whether $\mathcal{P}^{s}(\rho)$ is indefinite or not hinges completely on the vectors within  subspace $\operatorname{span} \{X,Y\}=\{X'|X'=c_{1}X+c_{2}Y\}$. To be specific, $\mathcal{P}^{s}(\rho)$ is indefinite if and only if there are both positive vector $X'_{1}$ and negative vector $X'_{2}$ within the subspace $\operatorname{span}\{X,Y\}$. Then we may construct  positive and negative vectors when $-\infty<s\leq0$. In light of \cref{th:gene_skew_non}, when $X'$ is a Hermitian matrix, i.e. $c_{1}$ and $c_{2}$ are both real numbers, it is then a positive vector, $I^{s}(\rho,X')>0$. That means for arbitrary observables $X,Y$, the positive vector within the subspace $\operatorname{span}\{X,Y\}$ is obtained. For negative vector, $\forall A\in M_{d}(\mathds{C})$, we have
\begin{align}
I^{s}(\rho,A)
= &(\lambda_{1}-m_{s}(\lambda_{1},\lambda_{2})|\widetilde{A}_{21}|^2 + \notag \\
&(\lambda_{2}-m_{s}(\lambda_{2},\lambda_{1}))|\widetilde{A}_{12}|^2 \; .
\label{eq:qubit_gene_qfi}
\end{align}
Here, $\widetilde{A}_{12}=\braket{\psi_{1}|A|\psi_{2}}$ and $\widetilde{A}_{21}=\braket{\psi_{2}|A|\psi_{1}}$ can be viewed as matrix elements of $A$ with respect to basis $\{\ket{\psi_{1}},\ket{\psi_{2}}\}$, which means $A$ and $\widetilde{A}$ are associated via unitary transformation, $\widetilde{A}=U^{\dagger}AU$ with $UU^{\dagger}=\mathds{1}$.

For qubit system, arbitrary two observables $X,Y$ can be expressed as
\begin{align}
X = \left(
\begin{matrix}
\times & \alpha  \\
\alpha^{*} & \times
\end{matrix}
\right) \; , \;
Y = \left(
\begin{matrix}
\times & \beta  \\
\beta^{*} & \times
\end{matrix}
\right) \; .
\end{align}
Here, the diagonal elements can be any real numbers; $\alpha,\beta$ are arbitrary nonzero complex numbers. Since unitary transformation does not change the Hermiticity, that means after the unitary transformation, ${X}$ and ${Y}$ may take similar forms
\begin{align}
\widetilde{X} = \left(
\begin{matrix}
\times & \alpha'  \\
\alpha'^{*} & \times
\end{matrix}
\right) \; , \;
\widetilde{Y} = \left(
\begin{matrix}
\times & \beta'  \\
\beta'^{*} & \times
\end{matrix}
\right) \; .
\end{align}
And hence
\begin{align}
\widetilde{X}' = U^{\dagger}X'U = \left(
\begin{matrix}
\times & c_{1}\alpha' + c_{2}\beta'  \\
c_{1}\alpha'^{*} + c_{2}\beta'^{*} & \times
\end{matrix}
\right) \; ,
\end{align}
where $X'=c_{1}X+c_{2}Y$. Considering \cref{eq:qubit_gene_qfi}, if $c_{1}$ and $c_{2}$ satisfy $c_{1}\alpha' + c_{2}\beta' \neq 0$ and $c_{1}\alpha'^{*} + c_{2}\beta'^{*} = 0$, then $I^{s}(\rho,X')<0$. Thus the negative vector $X'=c_{1}X+c_{2}Y$ is constructed, where $c_{1}\neq 0$, $c_{2}=-\frac{\alpha'^*}{\beta'^*}c_{1}$ and $\alpha'\neq\kappa \beta'$ with $\kappa$ being a real number. Note, in case $\alpha'=0$ or $\beta'=0$, we readily have $I^{s}(\rho,X)=0$ or $I^{s}(\rho,Y)=0$. On the other hand, if $\alpha'=\kappa \beta'$ and $X$ is linearly dependent on $Y$, i.e., $X=\kappa Y$, then $\det \mathcal{P}^{s}(\rho)=0$. Of the case $s=-\infty$, because $I^{-\infty}(\rho,X)
= (\lambda_{1}-\lambda_{2})|\braket{\psi_{1}|X|\psi_{2}}|^2$ and $\zeta_{\rho}^{-\infty}(X,Y) = (\lambda_{2}-\lambda_{1})\braket{\psi_{1}|X|\psi_{2}}\braket{\psi_{1}|Y|\psi_{2}}$,
we have $I^{-\infty}(\rho,X)I^{-\infty}(\rho,Y)=|\zeta_{\rho}^{-\infty}(X,Y)|^2$. Hereto, we have proved the theorem in all possible situations, and notice that the polarization identity for the sesquilinear form of $\zeta_{\rho}^{s}(X,Y)$ gives
\begin{align}
\zeta_{\rho}^{s}(X,Y) = &\frac{1}{4}[(I^{s}(\rho,X+Y)-I^{s}(\rho,X-Y)) + \notag \\
&i(I^{s}(\rho,X-iY)-I^{s}(\rho,X+iY))] \; ,
\end{align}
which implies
\begin{align}
&\operatorname{Re} \zeta_{\rho}^{s}(X,Y) = \frac{1}{4}(I^{s}(\rho,X+Y)-I^{s}(\rho,X-Y)) \; , \\
&\operatorname{Im} \zeta_{\rho}^{s}(X,Y) = \frac{1}{2i}\tr[\rho[X,Y]] \; .
\end{align}
\end{proof}
\begin{corollary}\label{coro:gene_skew_inf_inequality}
For a qubit system, the skew information inequality
\begin{align}
\mathcal{I}_{\rho}^{s}(X,Y) \leq \frac{1}{4}|\tr[\rho [X,Y]]|^2 \;
\label{eq:gene_skew_inf_inequality}
\end{align}
exists, where $\mathcal{I}_{\rho}^{s}(X,Y):=I^{s}(\rho,X)I^{s}(\rho,Y)-
\frac{1}{16}[I^{s}(\rho,X+Y)-I^{s}(\rho,X-Y)]^2$ and $\mathcal{I}_{\rho}^{s}(X,Y)\geq 0$.
\end{corollary}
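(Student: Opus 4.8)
The plan is to read the corollary straight off the two-sided qubit bound $|\operatorname{Re}\zeta_{\rho}^{s}(X,Y)|^2 \leq I^{s}(\rho,X)I^{s}(\rho,Y) \leq |\zeta_{\rho}^{s}(X,Y)|^2$ of the preceding theorem, combined with the polarization identity for the sesquilinear form $\zeta_{\rho}^{s}$. First I would record that for observables $X,Y$ the commutator $[X,Y]$ is anti-Hermitian, so $\tr[\rho[X,Y]]$ is purely imaginary; hence, using $\operatorname{Im}\zeta_{\rho}^{s}(X,Y)=\tfrac{1}{2i}\tr[\rho[X,Y]]$, one gets $|\operatorname{Im}\zeta_{\rho}^{s}(X,Y)|^2=\tfrac{1}{4}|\tr[\rho[X,Y]]|^2$. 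Likewise, from $\operatorname{Re}\zeta_{\rho}^{s}(X,Y)=\tfrac{1}{4}(I^{s}(\rho,X+Y)-I^{s}(\rho,X-Y))$ one has $|\operatorname{Re}\zeta_{\rho}^{s}(X,Y)|^2=\tfrac{1}{16}(I^{s}(\rho,X+Y)-I^{s}(\rho,X-Y))^2$.

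Next I would apply the orthogonal split $|\zeta_{\rho}^{s}(X,Y)|^2=|\operatorname{Re}\zeta_{\rho}^{s}(X,Y)|^2+|\operatorname{Im}\zeta_{\rho}^{s}(X,Y)|^2$ to the upper bound of the theorem and subtract $|\operatorname{Re}\zeta_{\rho}^{s}(X,Y)|^2$ from both sides; in view of the two identities just recorded, what remains is exactly $\mathcal{I}_{\rho}^{s}(X,Y)\leq\tfrac{1}{4}|\tr[\rho[X,Y]]|^2$. For the accompanying nonnegativity statement $\mathcal{I}_{\rho}^{s}(X,Y)\geq 0$ I would invoke the lower half of the same two-sided bound, namely the Cauchy-Schwarz inequality $|\operatorname{Re}\zeta_{\rho}^{s}(X,Y)|^2\leq I^{s}(\rho,X)I^{s}(\rho,Y)$ already established in \cref{theo:gene_qfi_ineq}, whose left side equals $\tfrac{1}{16}(I^{s}(\rho,X+Y)-I^{s}(\rho,X-Y))^2$ by the polarization identity.

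I do not anticipate a genuine obstacle here, since the corollary is pure bookkeeping on top of the qubit theorem. The only two points that warrant a line of care are the anti-Hermiticity of $[X,Y]$, which is what pins down the constant $\tfrac14$, and the remark that the degenerate boundary situations ($\rho$ maximally mixed or pure, and $X$ linearly dependent on $Y$) have already been settled inside the proof of the preceding theorem, so no separate treatment of them is needed.
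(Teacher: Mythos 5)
Your derivation is correct and is exactly the route the paper intends: the corollary is read off the two-sided qubit bound by writing $|\zeta_{\rho}^{s}(X,Y)|^2=|\operatorname{Re}\zeta_{\rho}^{s}(X,Y)|^2+|\operatorname{Im}\zeta_{\rho}^{s}(X,Y)|^2$, identifying the real and imaginary parts via the polarization identity, and using the anti-Hermiticity of $[X,Y]$ to fix the factor $\tfrac{1}{4}$. The nonnegativity claim likewise follows from the Cauchy--Schwarz half exactly as you state, so nothing is missing.
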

\begin{figure}
\centering
\includegraphics[width=0.8\linewidth]{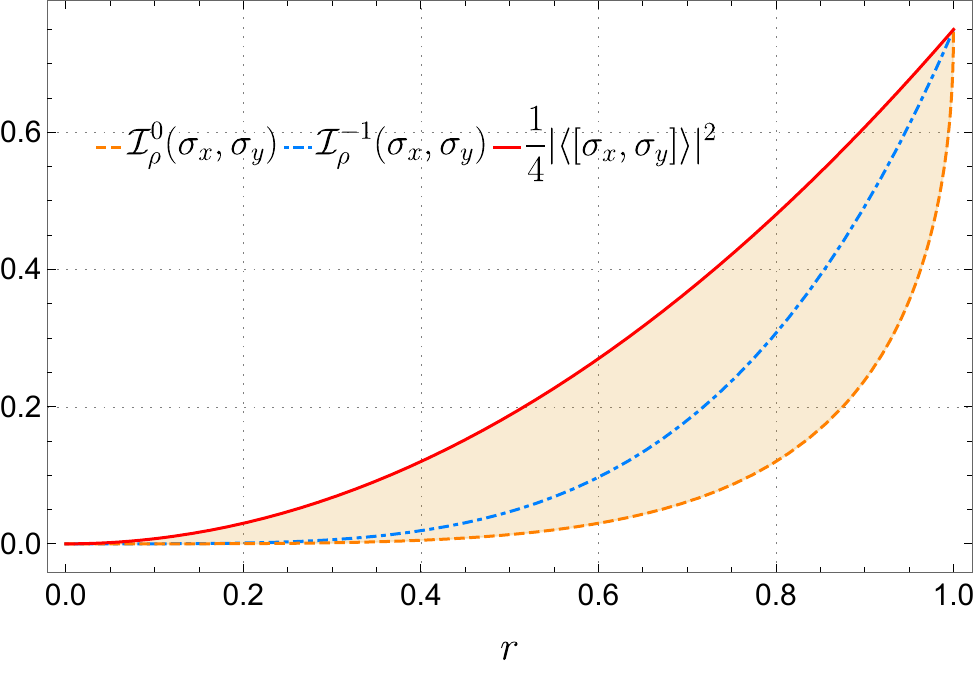}
\caption{\label{fig:gene_skew_inf_inequality} The skew information inequality for Pauli operators $\sigma_{x}$ and $\sigma_{y}$ in quantum state $\rho=1/2(\mathds{1}+\vec{r}\cdot\vec{\sigma})$ with $\vec{r}=r(\sin\theta\cos\phi,\sin\theta\sin\phi,\cos\theta)^{\mathrm{T}},\theta=\phi=\pi/6,r\in[0,1]$. The dashdotted blue and dashed orange curves correspond to quantum Fisher information and Wigner-Yanase skew information respectively.}
\end{figure}
For comparison, we notice that the Schrödinger uncertainty relation \cref{eq:schrodinger_form} can be reformulated as
\begin{align}
\mathcal{V}_{\rho}(X,Y) \geq \frac{1}{4}|\tr[\rho [X,Y]]|^2 \; ,
\label{eq:schrodinger_form1}
\end{align}
with $\mathcal{V}_{\rho}(X,Y):=V(\rho,X)V(\rho,Y)-
\frac{1}{16}[V(\rho,X+Y)-V(\rho,X-Y)]^2$. \cref{eq:schrodinger_form1} is an improvement to the Robertson uncertainty relation \cite{robertson29}
\begin{align}
V(\rho,X)V(\rho,Y) \geq \frac{1}{4}|\tr[\rho [X,Y]]|^2 \; .
\end{align}
\cref{eq:schrodinger_form1} sets a lower bound for the measurement uncertainty of incompatible observables. As a counterpart, our inequality \cref{eq:gene_skew_inf_inequality} provides an upper bound for accessible information about incompatible observables. To illustrate, we plot the constraint contour for the case of Pauli operators $\sigma_{x}$ and $\sigma_{y}$ in \cref{fig:gene_skew_inf_inequality}.

Now we exhibit certain physical meanings of the above results. The real part of $\zeta_{\rho}^{s}(X,Y)$ can be reformulated with the anticommutator of $X,Y$, and we then get the following inequality:
\begin{align}
\frac{1}{4}\left|\Braket{\{X,Y\}}-2\operatorname{Re}\eta_{\rho}^{s}(X,Y)\right|^2& \leq I^{s}(\rho,X)I^{s}(\rho,Y) \notag \\
\leq \frac{1}{4}\left|\Braket{[X,Y]}\right|^2& + \frac{1}{4}\left|\Braket{\{X,Y\}}-2\operatorname{Re}\eta_{\rho}^{s}(X,Y)\right|^2 \; .
\label{eq:coherence_ur}
\end{align}
Here, $\eta_{\rho}^{s}(X,Y):=\sum_{i,j}m_{s}(\lambda_{i},\lambda_{j})\braket{\psi_{i}|X|\psi_{j}}\braket{\psi_{j}|Y|\psi_{i}}$. It has been demonstrated that the skew information can be viewed as a measure of quantum coherence \cite{girolami14} and satisfies the bona fide
criteria for coherence monotones \cite{baumgratz14}. Therefore, the generalized skew information provide the more alternative choices for coherence measure. The coherence of a quantum state depends on the choice of the reference basis. An interesting question is to consider the trade-off relation for quantum coherence measures in different reference bases \cite{cheng15,singh16,fan19}. $I^{s}(\rho,X)$ and $I^{s}(\rho,Y)$ quantifies the coherence in the eigen-basis of observables $X$ and $Y$, respectively. Therefore, \cref{eq:coherence_ur} also represents a coherence uncertainty relation.


\section{Discussion}

The generalized skew information is found can play an important role in quantum information science. In this paper, we revisit the skew information introduced by Wigner and Yanase and define a generalized skew information quantity $I^{s}(\rho,X)$ which contains Wigner-Yanase skew information and QFI as special cases. We give a transparent convexity proof about $I^{s}(\rho,X)$, which in the meantime provides a simple proof of Wigner-Yanase-Dyson conjecture. We consider the quantum limit from the information point of view, which yields the skew information inequality and stands as a supplements to the known uncertainty relation. In practice, for accessible information content, the upper limit usually might be more meaningful than the lower one. Nevertheless, it should be noted that the result \cref{eq:gene_skew_inf_inequality} is only applicable to the qubit system, the highlight in quantum information, its high dimensional extension still requires extra strength to fulfill.

\section*{Acknowledgements}
\noindent
This work was supported in part by the National Natural Science Foundation of China(NSFC) under the Grants 11975236 and by the University of Chinese Academy of Sciences.





\end{document}